\documentclass[a4paper,USenglish, cleveref, autoref]{lipics-v2019}

\usepackage{bm}
\usepackage{here}
\usepackage{amsthm}
\theoremstyle{definition}
\usepackage{amsmath,amssymb}
\usepackage{algorithm}
\usepackage[noend]{algpseudocode}
\usepackage{caption}
\usepackage{url}
\usepackage{paralist}


\bibliographystyle{plainurl}

\title{Storing Set Families More Compactly with Top ZDDs} 





\author{Kotaro Matsuda}{Graduate School of Information Technology and Science, The University of Tokyo, Japan}{kotaro\_matsuda@mist.i.u-tokyo.ac.jp}{}{}

\author{Shuhei Denzumi}{Graduate School of Information Technology and Science, The University of Tokyo, Japan}{denzumi@mist.i.u-tokyo.ac.jp}{https://orcid.org/xxxx-xxxx-xxxx-xxxx}{}

\author{Kunihiko Sadakane}{Graduate School of Information Technology and Science, The University of Tokyo, Japan}{sada@mist.i.u-tokyo.ac.jp}{https://orcid.org/0000-0002-8212-3682}{}

\authorrunning{K. Matsuda, S. Denzumi and K. Sadakane}

\Copyright{Kotaro Matsuda, Shuhei Denzumi and Kunihiko Sadakane}

\ccsdesc[500]{Theory of computation~Data structures design and analysis}

\keywords{top tree, Zero-suppressed Decision Diagram, space-efficient data structure}

\category{}

\relatedversion{}

\supplement{}


\acknowledgements{}

\nolinenumbers 

\hideLIPIcs  

\newcommand{\edgetypespan}{{\it type\_span}}
\newcommand{\lspan}{{\it label\_span}}
\newcommand{\edgetyperoot}{{\it type\_root}}
\newcommand{\edgetypein}{{\it type\_in}}
\newcommand{\srcroot}{{\it B_{src\_root}}}
\newcommand{\dstroot}{{\it dst\_root}}
\newcommand{\srcin}{{\it src\_in}}
\newcommand{\dstin}{{\it dst\_in}}
\newcommand{\dstdummy}{{\it dst\_dummy}}
\newcommand{\clustersize}{{\it clsize}}
\newcommand{\pdiff}{{\it preorder\_diff}}
\newcommand{\ldiff}{{\it label\_diff}}
\newcommand{\edgeBV}{{\it B_{edge}}}
\newcommand{\dummyBV}{{\it B_{dummy}}}
\newcommand{\horizontalBV}{{\it B_{H}}}


\begin{document}
\maketitle

\begin{abstract}
Zero-suppressed Binary Decision Diagrams (ZDDs) are data structures
for representing set families in a compressed form.
With ZDDs, many valuable operations on set families can be done 
in time polynomial in ZDD size.
In some cases, however, the size of ZDDs for representing large set families
becomes too huge to store them in the main memory.

This paper proposes top ZDD, a novel representation of ZDDs which uses
less space than existing ones.
The top ZDD is an extension of top tree, which compresses trees, to
compress directed acyclic graphs by sharing identical subgraphs.
We prove that navigational operations on ZDDs can be done in time poly-logarithmic
in ZDD size, and show that there exist set families for which
the size of the top ZDD is 
exponentially
smaller than that of the ZDD.
We also show experimentally that our top ZDDs have smaller size
than ZDDs for real data.
\end{abstract}



\section{Introduction}\label{intro}

Zero-suppressed Binary Decision Diagrams (ZDDs)~\cite{Minato}
are data structures which are derived from Binary Decision Diagrams (BDDs)~\cite{Bryant}
and which represent a family of sets (combinatorial sets) in a compressed form by 
Directed Acyclic Graphs (DAGs).
ZDDs are data structures specialized for processing set families and it is known
sparse set families can be compressed well.
ZDDs support binary operations between two set families in time polynomial to the ZDD size.
Because of these advantages, ZDDs are used for combinatorial optimization problems
and enumeration problems.

Though ZDDs can store set families compactly, their size may grow for some set families,
and we need further compression.
DenseZDDs~\cite{denseZDD} are data structures for storing ZDDs in a compressed form
and supporting operations on the compressed representation.
DenseZDDs represent a ZDD by a spanning tree of the DAG representing it,
and an array of pointers between nodes on the spanning tree.
Therefore its size is always linear to the original size
and to compress more, we need another representation.

Our basic idea for compression is as follows.
In a ZDD, the identical sub-structures are shared and replaced by pointers.
However identical sub-structures cannot be shared if they appear at different heights in ZDD.
As a result, even if the DAG of a ZDD contains repetitive structures in height direction,
they cannot be shared.

For not DAGs but trees, there exists a data structure called top DAG compression~\cite{Bille},
which can capture repetitive structures in height direction.
We extend it for DAGs and apply to compress ZDDs which support the operations on compressed ZDDs.

\subsection{Our contribution}

We propose top ZDDs, which partition the edges of a ZDD into a spanning tree
and other edges called complement edges, and store each of them in a compressed form.
For the spanning tree, we use the top DAG compression, which represents a tree by
a DAG with fewer number of nodes.  For the complement edges, we store them in some nodes
of the top DAG by sharing identical edges.
We show that basic operations on ZDDs can be supported in $O(\log^2n)$ time
where $n$ is the number of nodes of the ZDD.
For further compression we use succinct data structures for trees~\cite{Navarro14}
and for bitvectors~\cite{Raman07,Grossi05}.

We show experimental results on the size of our top ZDDs and existing data structures,
and query time on them.  The results show that the top ZDDs use less space for most of
input data.


%

\section{Preliminaries}
Here we explain notations and basic data structures.

Let $C = \{1, \ldots, c\}$ be the universal set.  Any set in this paper is a subset of $C$.
The empty set is denoted by $ \emptyset $.
For a set $S = \{a_1, \ldots, a_s\} \subseteq C \ (s \ge 1)$, its size is denoted by $|S| = s$.
The size of the empty set is $|\emptyset| = 0$.
A subset of the power set of $C$ is called a set family.
If a set family $\mathcal{F}$ satisfies either
$S \in \mathcal{F} \Rightarrow \forall k \in S,  S\backslash\{k\} \in \mathcal{F}$
or
$S\in \mathcal{F} \Rightarrow \forall k\in C, S\cup\{k\} \in \mathcal{F}$,
$\mathcal{F}$ is said to be monotone.
If the former is satisfied, $\mathcal{F}$ is monotone decreasing
and the latter monotone increasing.

\subsection{Zero-suppressed Binary Decision Diagrams}
Zero-suppressed Binary Decision Diagrams (ZDDs)~\cite{Minato}
are data structures for manipulating finite set families.
A ZDD is a directed acyclic graph (DAG) $G=(V, E)$ with a root node
satisfying the following properties.
A ZDD has two types of nodes; branching nodes and terminal nodes.
There are two types of terminal nodes $\bot$ and $\top$.
These terminal nodes have no outgoing edges.
Each branching node $v$ has an integer label $\ell(v) \in \{1, \ldots, c\}$,
and also has two outgoing edges $0$-edge and $1$-edge.
The node pointed to by the $0$-edge ($1$-edge) of $v$ is denoted
by $v_0 = \it{zero}(v)$ ($v_1 = \it{one}(v)$).
If for any branching node $v$ it holds $\ell(v) < \ell(v_0)$ and $\ell(v) < \ell(v_1)$,
the ZDD is said to be ordered.
In this paper, we consider only ordered ZDDs.
For convenience, we assume $\ell(v) = c + 1$ for terminal nodes $v$.
We divide the nodes of the ZDD into layers
$L_1, \ldots, L_{c+1}$ $(i = 1, \ldots, c + 1)$ according to the labels of the nodes.
Note that if $i \geq j$ there are no edges from layer $L_i$ to layer $L_j$.
The number of nodes in ZDD $G$ is denoted by $|G|$ and called the size of the ZDD.
On the other hand, the data size of a ZDD stands for the number of bits used in the data structure
representing the ZDD.

The set family represented by a ZDD is defined as follows.
\begin{definition}[The set family represented by a ZDD]\label{zdddef}
Let $v$ be a node of a ZDD and $v_0 = \it{zero}(v)$, $v_1 = \it{one}(v)$.
Then the set family $\mathcal{F}_v$
represented by $v$ is defined as follows.
\begin{enumerate}
\item If $v$ is a terminal node:
if $v = \top$, $\mathcal{F}_v = \{\emptyset\}$,
if $v = \bot$, $\mathcal{F}_v = \emptyset$.
\item If $v$ is a branching node:
$\mathcal{F}_v = \{S \cup \{\ell(v)\} \mid S \in \mathcal{F}_{v_1}\} \cup \mathcal{F}_{v_0}$.
\end{enumerate}
\end{definition}

For the root node $r$ of ZDD $G$,
$\mathcal{F}_r$ corresponds to the set family represented by the ZDD $G$.
This set family is also denoted by $\mathcal{F}_G$.

All the paths from the root to the terminal $\top$ on ZDD $G$
have one-to-one correspondence to all the sets $S = \{a_1, \ldots, a_{s}\}$ in the set family
represented by $G$.
Consider a traversal of nodes from the root towards terminals
so that for each branching node $v$ on the path, if $\ell(v) \not \in S$
we go to $v_0 = \it{zero}(v)$ from $v$, and if $\ell(v) \in S$
we go to $v_1 = \it{one}(v)$ from $v$.
By repeating this process, if $S\in \mathcal{F}_G$ we arrive at $\top$,
and if $S\not \in \mathcal{F}_G$ we arrive at $\bot$ or
the branching node corresponding to $a_i\in S$ does not exist.

\subsection{Succinct data structures}
Succinct data structures are data structures whose size match the information theoretic lower bound.
Formally, a data structure is succinct if any element of a finite set $U$ with cardinality $L$
is encoded in $\log_2(L)+o(\log_2(L))$ bits.
In this paper we use succinct data structures for bitvectors and trees.

\subsubsection{Bitvectors}

Bitvectors are the most basic succinct data structures.
A length-$n$ sequence $B \in \{0,1\}^n$ of $0$'s and $1$'s is called a bitvector.
On this bitvector we consider the following operations:
\begin{itemize}
\item $access(B,i)$ ($1 \le i \le n$): returns $B[i] \in \{0,1\}$, the $i$-th entry of $B$.
\item $rank_c(B,i)$ ($1 \le i \le n, c=0,1$): returns the number of $c$ in the first $i$ bits of $B$.
\item $select_c(B,j)$ ($1 \le j \le n, c=0,1$): returns the position of the $j$-th occurrence of $c$ in $B$.
\end{itemize}

The following result is known.
\begin{theorem}(\cite{Raman07})
For a bitvector of length $n$, using a $n+\mathrm{O}(n\log\log n/\log n)$-bit data structure
constructed in $\mathrm{O}(n)$ time,
$access(B,i), rank_c(B,i), select_c(B,j)$ are computed in constant time
on the word-RAM with word length $\Omega(\log n)$.
\end{theorem}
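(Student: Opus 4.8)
The plan is to build three components: a raw copy of $B$ for $access$, a two-level directory plus a universal lookup table for $rank$, and a separate hierarchical sampling structure for $select$. First I would keep $B$ itself packed into $\lceil n/w\rceil$ machine words of $w=\Theta(\log n)$ bits; then $access(B,i)$ is a single word read followed by a shift-and-mask, which is $O(1)$. This already uses $n+O(\log n)$ bits, so the entire extra budget of $O(n\log\log n/\log n)$ bits must be spent on the rank and select indexes.

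For $rank$ I would use the classical two-level decomposition. Partition $B$ into superblocks of $s=\log^2 n$ consecutive bits and, inside each superblock, into blocks of $b=\tfrac12\log n$ bits. I store for every superblock the number of $1$s preceding it, an absolute count in $[0,n]$ taking $\log n$ bits; there are $n/s$ superblocks, contributing $O(n/\log n)$ bits. Next I store for every block the number of $1$s preceding it \emph{within its own superblock}, a relative count in $[0,s]$ taking $O(\log\log n)$ bits; there are $n/b$ blocks, contributing $O\bigl(n\log\log n/\log n\bigr)$ bits, and this is the dominant term that yields exactly the claimed redundancy. Finally, for the residual count inside a block I precompute a universal table indexed by the $b$-bit block pattern and an offset in $[0,b]$; since $2^{b}=\sqrt n$, the table occupies $O(\sqrt n\,\log n\,\log\log n)=o(n)$ bits. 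A $rank_1$ query then sums three $O(1)$ lookups (superblock, block, table), and $rank_0(B,i)=i-rank_1(B,i)$.

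The genuinely delicate part is $select$ in constant time, since binary searching the rank directory only gives $O(\log n)$. Here I would follow the Clark--Munro style hierarchy. Partition the $1$-bits into groups of $t=\log n\log\log n$ consecutive ones and record the position of the first one of each group; a group whose bit-span exceeds $t^2$ is declared \emph{sparse}, and all its answers are stored verbatim, which is affordable because such groups are few. A \emph{dense} group is recursed upon with a second-level sampling at granularity $\log t$, and the innermost residual is resolved by another universal lookup table over short bit patterns, mirroring the rank table. Choosing the thresholds so that each level's explicit storage and each table stays within $o(n/\log n)$ bits, while the arithmetic combining the levels stays $O(1)$, is the crux; this is where the case analysis between the sparse and dense regimes must be balanced carefully. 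I would handle $select_0$ symmetrically, either by a parallel structure on the complemented vector or via the same split applied to the zeros.

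For construction time, all directories and the per-group sparse/dense classification are filled by a single left-to-right scan of $B$ in $O(n)$ time, and the universal tables (of size $o(n)$) are precomputed in $o(n)$ time, giving the stated $O(n)$ bound. The main obstacle, as noted, is the select hierarchy: unlike rank it admits no single clean two-level form, and the space accounting across the sparse/dense split and the nested tables is what requires care to keep the total redundancy within $O(n\log\log n/\log n)$.
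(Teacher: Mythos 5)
Your rank structure is fine: the two-level directory (absolute superblock counts, relative block counts, plus a $o(n)$-bit universal table) is the standard construction and its dominant term is indeed $O(n\log\log n/\log n)$ bits. The problem is the select structure, and it is a genuine quantitative gap, not just a detail left to balance. With your parameters, the first level of the Clark--Munro hierarchy already blows the budget: you store one absolute position ($\log n$ bits) per group of $t=\log n\log\log n$ ones, which for a dense vector (say all ones, $m=\Theta(n)$) costs $(m/t)\log n = \Theta(n/\log\log n)$ bits; the verbatim storage for sparse groups has the same order, $(n/t^2)\cdot t\log n = \Theta(n/\log\log n)$. Since $n/\log\log n = \bigl(n\log\log n/\log n\bigr)\cdot\bigl(\log n/(\log\log n)^2\bigr)$, this exceeds the claimed redundancy $O(n\log\log n/\log n)$ by an unbounded factor. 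This is not an artifact of your threshold choice that "careful balancing" can repair: the Clark--Munro select structure is known to use $\Theta(n/\log\log n)$ bits, and no rescaling of $t$ fixes it within that scheme (taking $t$ large enough to make the level-one samples cheap, e.g. $t=\Omega(\log^2 n/\log\log n)$, pushes the cost into the deeper levels of the recursion, which is exactly where Clark's accounting bottoms out at $n/\log\log n$).

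Achieving constant-time $select$ with only $O(n\log\log n/\log n)$ extra bits requires a genuinely different construction --- this is in fact the main content of the cited result (Raman, Raman and Rao's indexable-dictionary decomposition; alternatively Golynski's select index, which also proves a matching lower bound $\Omega(n\log\log n/\log n)$, so the stated bound is tight and leaves no slack for a lossier scheme). Note also that the paper you are working from does not prove this theorem at all; it is imported as a black box, so the comparison here is purely about whether your reconstruction establishes the statement --- as written, it establishes a weaker theorem with redundancy $O(n/\log\log n)$, which is the classical Jacobson--Clark bound, not the one claimed.
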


Consider a bitvector of length $n$ with $m$ ones.
For a sparse bitvector, namely, the one with $m={\mathrm{o}}(n/\log n)$,
we can obtain a more space-efficient data structure.
\begin{theorem}(\cite{Grossi05})
For a bitvector $B$ of length $n = 2^w$ with $m$ ones,
$select_1(B,i)$ is computed in constant time
on the word-RAM with word length $\Omega(\log n)$ using a
$m(2+w - \lfloor \log_2 n \rfloor)+{\mathrm{O}(m\log\log m/\log m)}$-bit data structure.
\end{theorem}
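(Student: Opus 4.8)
The plan is to realize $B$ by the Elias--Fano encoding of the sorted positions of its ones, which is the construction underlying the cited bound. Write $p_1 < p_2 < \cdots < p_m$ for the positions of the $m$ ones, each an integer in $\{1,\ldots,n\}$ that fits in $w$ bits since $n = 2^w$. The whole task is to store this increasing sequence so that the $i$-th value $p_i = select_1(B,i)$ is recovered in constant time while spending only the claimed number of bits.

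First I would fix a split parameter $t = \lfloor \log_2 m \rfloor$ and cut each $p_i$ into its top $t$ bits $h_i = \lfloor p_i / 2^{w-t}\rfloor$ and its low $w-t$ bits $\ell_i = p_i \bmod 2^{w-t}$. The low parts are stored verbatim in a packed array $L$ of $m$ fields of width $w - t$, contributing $m(w-t) = m(w - \lfloor \log_2 m\rfloor)$ bits, and reading $L[i]$ is a single word access. The high parts form a non-decreasing sequence of values in $\{0,\ldots,2^{t}-1\}$, which I would encode in negated-unary form: scanning the buckets $0,1,\ldots,2^t-1$, write into a bitvector $H$ one $1$ per element (grouped by its high value) followed by a single $0$ delimiting each bucket. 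Then $H$ has exactly $m$ ones and at most $2^{t} \le m$ zeros, so its length is at most $2m$, contributing the $2m$ term of the bound.

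Next I would equip $H$ with the constant-time $select$ structure of the previous theorem. Since $H$ has length $O(m)$, that structure adds only $O(m \log\log m / \log m)$ bits, the lower-order term in the claim, and supports $select_1(H,\cdot)$ in constant time on the word-RAM. The reconstruction is then the standard identity: the $i$-th one of $H$ lies at position $select_1(H,i)$, and the number of bucket delimiters preceding it equals its high value, namely $h_i = select_1(H,i) - i$. Combining with $\ell_i = L[i]$ gives $select_1(B,i) = h_i \cdot 2^{w-t} + \ell_i$, computed with a constant number of word operations.

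Summing the three contributions yields exactly $m(2 + w - \lfloor \log_2 m\rfloor) + O(m \log\log m/\log m)$ bits, and construction is a single left-to-right pass over the sorted positions followed by building the auxiliary index, in $O(m)$ time (or $O(n)$ if the positions must first be extracted from $B$). The main obstacle I anticipate is purely in the accounting: one must verify that the split $t = \lfloor\log_2 m\rfloor$ simultaneously (i) keeps $H$ within $2m$ bits by bounding the number of buckets by $m$, and (ii) collapses the low-bits term to $m(w - \lfloor\log_2 m\rfloor)$, so that the two dominant terms line up with the stated formula; the constant-time query and the $o(m)$ overhead then follow directly from applying the dense bitvector theorem to $H$.
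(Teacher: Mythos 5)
Your construction is correct, and it is exactly the standard Elias--Fano argument that underlies the cited result: the paper itself gives no proof of this theorem (it is quoted from the reference), so there is nothing to diverge from, and your split into low bits stored verbatim plus unary-coded high parts with a constant-time $select$ index, recovering $p_i = (select_1(H,i)-i)\cdot 2^{w-t} + L[i]$, is the intended construction. One point worth flagging: your accounting exposes a typo in the paper's statement --- as printed, the bound is $m(2+w-\lfloor\log_2 n\rfloor)$, which with $n=2^w$ collapses to $2m+o(m)$ bits and is information-theoretically impossible (storing $m$ arbitrary positions needs about $\log_2\binom{n}{m}$ bits); the term should be $\lfloor\log_2 m\rfloor$, which is precisely what your derivation yields.
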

Note that on this data structure,
$rank_0,rank_1,select_0$ takes $\mathrm{O}(\log m)$ time.

\subsubsection{Trees}
Consider a rooted ordered tree with $n$ nodes.  An information-theoretic lower bound
of such trees is $2n-\Theta(\log n)$ bits.
We want to support the following operations:
(1) $parent(x)$: returns the parent of node $x$,
(2) $firstchild(x),lastchild(x)$: returns the first/last child of node $x$,
(3) $nextsibling(x),prevsibling(x)$: returns the next/previous sibling of node $x$
(4) $isleaf(x)$: returns if node $x$ is a leaf or not,
(5) $preorder\_rank(x)$: returns the preorder of node $x$,
(6) $preorder\_select(i)$: returns the node with preorder $i$,
(7) $leaf\_rank(x)$: returns the number of leaves whose preorders are smaller than that of node $x$,
(8) $leaf\_select(i)$: returns the $i$-th leaf in preorder,
(9) $depth(x)$: returns the depth of node $x$, that is, the distance from the root to $x$,
(10) $subtreesize(x)$: returns the number of nodes in the subtree rooted at node $x$,
(11) $lca(x,y)$: returns the lowest common ancestor (LCA) between nodes $x$ and $y$.

\begin{theorem}(\cite{Navarro14})
On the word-RAM with word length $\Omega(\log n)$,
the above operations are done in constant time using
a $2n+{\mathrm{o}}(n)$-bits data structure.
\end{theorem}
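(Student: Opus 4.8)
The plan is to use the balanced-parentheses (BP) encoding of the tree together with a range min-max tree. First I would linearize the tree by a depth-first traversal, writing an opening parenthesis (a $1$ bit) when a node is first visited and a closing parenthesis (a $0$ bit) when its subtree has been completely processed. This produces a bitvector $P$ of length exactly $2n$, which matches the stated information-theoretic lower bound up to the $o(n)$ term, and each node is identified with the position of its opening parenthesis.

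The second step is to reduce every listed operation to a constant number of primitive queries on $P$. On top of $access$, $rank_c$ and $select_c$, which run in constant time by the bitvector theorem cited above, the primitives are: the excess $E(i) = rank_1(P,i) - rank_0(P,i)$, which equals $depth(x)$ when $i$ is the position of $x$; $\mathrm{findclose}(i)$ and $\mathrm{findopen}(i)$, which locate matching parentheses; $\mathrm{enclose}(i)$, the position of the tightest enclosing open parenthesis; and range-excess queries returning the position of the minimum excess in an interval together with the count of such minima. From these one gets $parent(x)=\mathrm{enclose}(x)$, $nextsibling(x)=\mathrm{findclose}(x)+1$, $prevsibling(x)=\mathrm{findopen}(x-1)$, $subtreesize(x)=(\mathrm{findclose}(x)-x+1)/2$, $preorder\_rank(x)=rank_1(P,x)$, $preorder\_select(i)=select_1(P,i)$, and $lca(x,y)$ from the minimum-excess position in the interval between $x$ and $y$; the leaf operations follow by counting positions where an opening parenthesis is immediately followed by a closing one, again via $rank$ and $select$. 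Each reduction uses only $O(1)$ primitive calls, so it remains to implement the primitives.

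The third step is the technical heart: supporting $\mathrm{findclose}$, $\mathrm{findopen}$, $\mathrm{enclose}$ and the range-minimum primitives in constant time with only $o(n)$ extra bits. Here I would build the range min-max tree, partitioning $P$ into blocks of $\Theta(\log n)$ bits and storing, for each block and for a constant number of coarser summary levels, the relative minimum and maximum excess and the number of minima. Searches that stay inside one block are answered by precomputed lookup tables indexed by the block contents (the Four-Russians technique, valid because a block fits in $O(\log n)$ bits and the word length is $\Omega(\log n)$); searches crossing block boundaries descend and ascend the summary structure. A naive min-max tree of logarithmic height would only give $O(\log n)$ time, so to reach $O(1)$ I would use a constant number of summary levels with polylogarithmic fan-out, each shrinking the search range by a polylogarithmic factor, so that a constant number of table lookups suffices; all stored summaries occupy $o(n)$ bits.

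Combining the $O(1)$-time primitives with the $O(1)$-time bitvector operations then yields $O(1)$ time for each listed operation, while the total space is the $2n$ bits of $P$ plus $o(n)$ bits of summaries and tables. The main obstacle I anticipate is exactly this constant-time (rather than $O(\log n)$) realization of the excess-based searches: forcing the number of summary levels to be constant while keeping each level's redundancy $o(n)$ requires the careful polylogarithmic block-size and fan-out choice and heavy use of word-parallel table lookups. By contrast, the routing of every remaining operation through these primitives is comparatively mechanical.
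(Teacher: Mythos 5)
First, a point of reference: the paper does not prove this statement at all --- it is quoted verbatim as a known theorem of Navarro and Sadakane~\cite{Navarro14} and used as a black box, so there is no internal proof to compare against. Your sketch is in effect an attempt to reprove that cited result, and its first two steps are sound and follow the standard route of that work: the $2n$-bit BP encoding, and the reduction of all listed operations to excess, $\mathrm{findopen}/\mathrm{findclose}/\mathrm{enclose}$, and range-minimum-excess primitives (your individual reductions, e.g.\ $parent(x)=\mathrm{enclose}(x)$, $subtreesize(x)=(\mathrm{findclose}(x)-x+1)/2$, $lca$ via range-min excess, leaves via the pattern of an opening parenthesis followed by a closing one, are all correct and standard).

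The genuine gap is in your third step, which is indeed the technical heart. A summary structure with polylogarithmic fan-out $\log^{c} n$ and a \emph{constant} number $d$ of levels covers only $\log^{cd} n$ blocks, i.e.\ $\Theta(\log^{cd+1} n)$ bits of $P$ --- nowhere near the full $2n$ bits. Equivalently, shrinking a range of length $\Theta(n)$ by a polylogarithmic factor per level still leaves a range of length $n/\log^{cd} n$ after $d$ levels, so to span the whole sequence you need $\Theta(\log n/\log\log n)$ levels, which puts you back at superconstant query time. This is precisely why the construction in~\cite{Navarro14} does not use one global min-max tree for the constant-time result: the range min-max tree with polylogarithmic fan-out (hence constant height) is deployed only \emph{inside} blocks of polylogarithmic size, and queries whose answer escapes the current block are handled by an entirely separate family of $o(n)$-bit global structures, in essence the pioneer-based machinery going back to Jacobson, Munro--Raman, and Geary--Raman--Raman: only $O(n/\mathrm{polylog}\,n)$ parentheses have their matching parenthesis outside their own block, so one can afford to store explicit auxiliary information about these few positions. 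Without this second mechanism (or something equivalent) for inter-block $\mathrm{findclose}$, $\mathrm{enclose}$, and range-min queries, your construction achieves $O(\log n/\log\log n)$ time, not the claimed $O(1)$.
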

We call this the BP representation in this paper.


\subsection{DenseZDD}
A DenseZDD~\cite{denseZDD} is a static representation of a ZDD with attricuted edges~\cite{Minato90} by using some succinct data structures.
In comparison to the ordinary ZDD, a DenseZDD provides a much faster membership operation and less memory usage for most of cases.
When we construct a DenseZDD from a given ZDD, 
dummy nodes are inserted 
so that $\ell(v_0)=\ell(v)+1$ holds for each internal node $v$ for fast traversal.
The spanning tree consisting of all reversed 0-edges is represented by straight forward BP.
The DenseZDD is a combination of this BP and other succinct data structures that represent remaining information of the given ZDD.


\section{Top Tree and Top DAG}
We explain 
top DAG compression~\cite{Bille} to compress labeled rooted trees.

Top DAG compression is a compression scheme for labeled rooted trees
by converting the input tree into top tree~\cite{toptree} and then
compress it by DAG compression~\cite{Buneman03,Downey80,Frick03}.
DAG compression is a scheme to represent a labeled rooted tree by a smaller DAG obtained by merging identical subtrees of the tree.
Top DAG compression can compress repeated sub-structures (not only subtrees).
For example, a path of length $n$ with identical labels can be represented
by a top DAG with $O(\log n)$ nodes.
Also, for a tree with $n$ nodes, accessing a node label,
computing the subtree size,
and tree navigational operations such as first child and parent
are done in $O(\log n)$ time.
Here we explain the top tree and its greedy construction algorithm.
We also explain operations on top DAGs.

The top tree~\cite{toptree} for a labeled rooted tree $T$
is a binary tree $\mathcal{T}$ representing the merging process of clusters of $T$ defined as follows.
We assume that all edges in the tree are directed from the root towards leaves,
and an edge $(u,v)$ denotes the edge from node $u$ to node $v$.
Clusters are subsets of $T$ with the following properties.
\begin{itemize}
\item A cluster is a subset $F$ of the nodes of the original tree $T$ such that
nodes in $F$ are connected in $T$.
\item $F$ forms a tree and we regard the node in $F$ closest to the root of $T$ as the root of the tree.
We call the root of $F$ as the \emph{top boundary node},
\item $F$ contains at most one node having directed edges to outside of $F$.
If there is such a node, it is called the \emph{bottom boundary node}.
\end{itemize}
A boundary node is either a top boundary node or a bottom boundary node.

By merging two adjacent clusters, we obtain a new cluster,
where merge means to take the union of node sets of two clusters and make it as the node set of
the new cluster.
There are five types of merges, as shown in Figure~\ref{clustermerge}.
In the figure, ellipses are clusters before merge, black circles are boundary nodes of new clusters,
and white circles are not boundary nodes in new clusters.
\begin{figure}[hbtp]
\begin{center}
\includegraphics[clip,width=8.0cm]{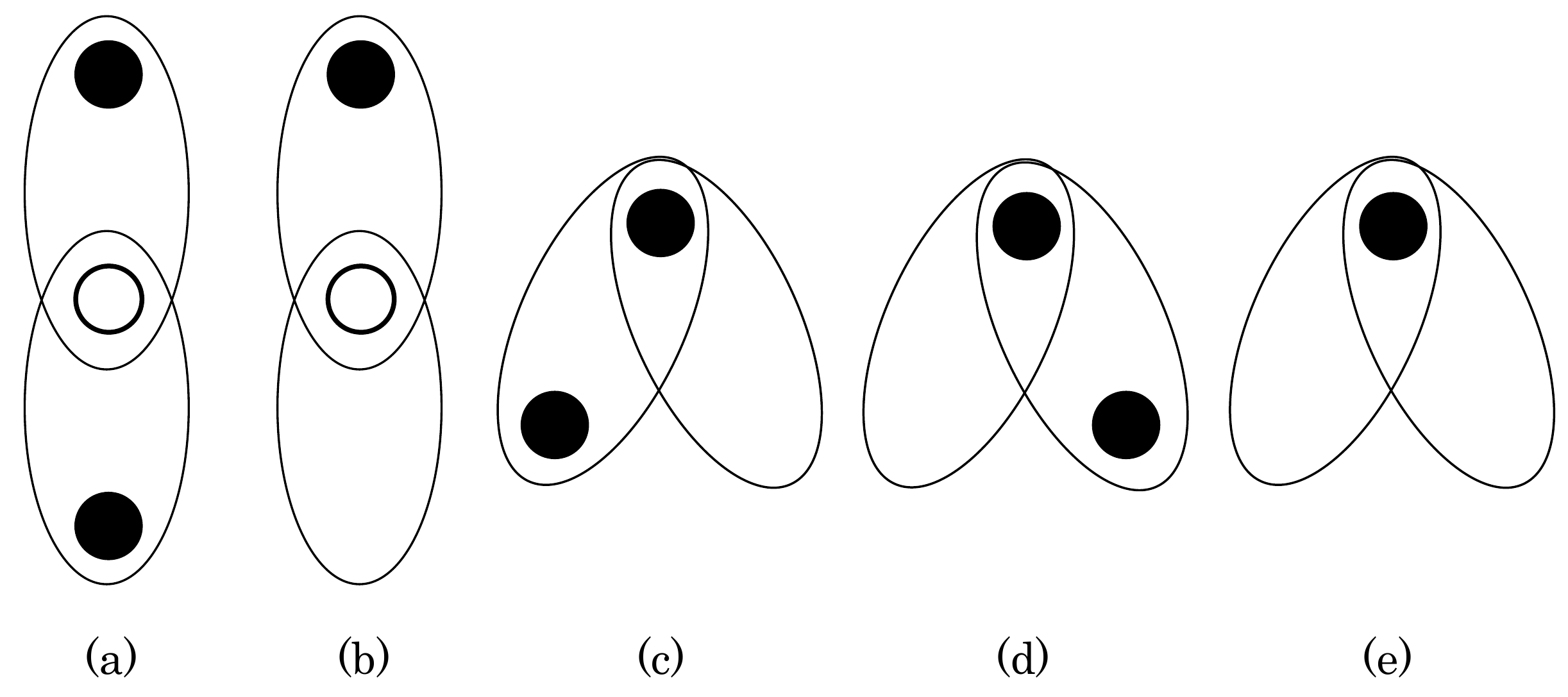}
\caption{Merging clusters.}
\label{clustermerge}
\end{center}
\end{figure}

These five merges are divided into two.
\begin{enumerate}
\item (a)(b) Vertical merge:
two clusters can be merged vertically if the top boundary node of one cluster
coincides with the bottom boundary node of the other cluster, and there are no edges
from the common boundary node to nodes outside the two clusters.
\item (c)(d)(e) Horizontal merge: 
two clusters can be merged horizontally if the top boundary nodes of the two clusters
are the same and at least one cluster does not have the bottom boundary node.
\end{enumerate}

The top tree of the tree $T$ is a binary tree $\mathcal{T}$ satisfying the following conditions.
\begin{itemize}
\item Each leaf of the top tree corresponds to a cluster with the endpoints of an edge of $T$.
\item Each internal vertex of the top tree corresponds to the cluster made by merging 
the clusters of its two children.  This merge is one of the five types
in Figure~\ref{clustermerge}.
\item The cluster of the root of the top tree is $T$ itself.
\end{itemize}


We call the DAG obtained by DAG compression of the top tree $\mathcal{T}$
as top DAG $\mathcal{T}D$, and the operation to compute the top DAG $\mathcal{T}D$
from tree $T$ is called top DAG compression~\cite{Bille}.

We define labels of vertices in the top tree to apply DAG compression as follows.
For a leaf of the top tree, we define its label as the pair of labels of both endpoints of
the corresponding edge in $T$.
For an internal vertex of the top tree, 
its label must have the information about cluster merge.
It is enough to consider three types of merges, not five as in Figure~\ref{clustermerge}.
For vertical merges, it is not necessary to store the information that the merged cluster
has the bottom boundary node or not.
For horizontal merges, it is enough to store if the left cluster has a bottom boundary node or not.
From this observation, we define labels of internal vertices as follows.
\begin{itemize}
\item For vertices corresponding to vertical merge: we set their labels as V.
\item For vertices corresponding to horizontal merge: we set their labels as
$\mathrm{H}_\mathrm{L}$ if the left child cluster has the bottom boundary node,
or $\mathrm{H}_\mathrm{R}$ if the right child cluster has the bottom boundary node.
If both children do not have bottom boundary nodes, the label can be arbitrary.
\end{itemize}

Top trees created by a greedy algorithm satisfy the following.
\begin{theorem}(\cite{Bille})
Let $n$ be the number of nodes of a tree $T$.
Then the height of $\mathrm{top\ tree}$ $\mathcal{T}$ created by a greedy algorithm
is $\mathrm{O}(\log n)$.
\end{theorem}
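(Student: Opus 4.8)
The plan is to analyze the greedy construction directly and to establish two facts: that the height of $\mathcal{T}$ is bounded by the number of rounds the algorithm performs, and that each round shrinks the number of surviving clusters by a constant factor. Together these give $O(\log n)$ rounds, and hence height $O(\log n)$. First I would fix the view of the greedy algorithm as operating on the \emph{cluster tree}: a partition of the edges of $T$ into clusters that itself forms a rooted tree, in which one cluster is the parent of another when the bottom boundary node of the former is the top boundary node of the latter. Initially every edge of $T$ is its own cluster, so the cluster tree has $n-1$ clusters. Each round performs a maximal set of pairwise non-conflicting merges, namely horizontal merges among clusters that share a top boundary node and vertical merges along maximal chains of clusters, and every such merge creates exactly one internal vertex of $\mathcal{T}$ whose two children are the merged clusters.

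Next I would bound the height by the number of rounds. Assign to each vertex of $\mathcal{T}$ the round in which its cluster was created, taking the original edge clusters to have creation round $0$. Since a merge in round $i$ combines two clusters that were present at the start of round $i$, each was created in some strictly earlier round; hence along any root-to-leaf path of $\mathcal{T}$ the creation rounds strictly decrease. Therefore at most one vertex of the path is created in any given round, the length of every root-to-leaf path is at most the number of rounds, and it suffices to prove that the number of rounds is $O(\log n)$.

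The heart of the argument is to show that if a round starts with $m$ clusters it ends with at most $\alpha m$ clusters for some fixed constant $\alpha<1$. I would classify the nodes of the cluster tree into branching nodes (at least two children), chain nodes (exactly one child), and leaves, and use the standard tree inequality that the number of leaves is at least the number of branching nodes plus one. Vertical merges pair up consecutive clusters along each maximal chain and thus eliminate a constant fraction of the chain nodes, while horizontal merges pair up children of each branching node and eliminate a constant fraction of those. Since leaves, chain nodes, and children of branching nodes together account for all $m$ clusters, summing the eliminations over the two phases yields the constant-factor bound, and the height bound then follows from the previous paragraph.

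The step I expect to be the main obstacle is making the horizontal merges simultaneously legal and plentiful. By the merge rule, two clusters sharing a top boundary may be merged only if at least one of them has no bottom boundary node, and a child cluster has a bottom boundary node precisely when it is not a leaf of the cluster tree; hence one cannot naively pair arbitrary siblings. Instead each non-leaf child must be paired with a leaf child, or two leaves paired together, and one must verify that enough such pairings exist to force the constant factor uniformly over all configurations. Controlling this interaction between the legality condition and the leaf-abundance inequality, so that neither the horizontal nor the vertical phase stalls on any input, is the delicate part; once it is handled, the constant-factor shrinkage per round and therefore the $O(\log n)$ height bound follow as sketched above.
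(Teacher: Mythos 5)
The paper itself does not prove this theorem: it is imported verbatim from the top-DAG paper \cite{Bille}, so your proposal can only be measured against the argument given there. Your skeleton reproduces the structure of that argument: (i) the height of $\mathcal{T}$ is at most the number of greedy rounds, since a round only merges clusters that exist at the start of the round, so creation rounds strictly decrease along any root-to-leaf path of $\mathcal{T}$; and (ii) each round decreases the number of clusters by a constant factor, hence $O(\log n)$ rounds. Part (i) as you state it is correct and complete.

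The genuine gap is in part (ii), which is the entire mathematical content of the theorem, and your proposal does not prove it --- it names it, labels it ``the delicate part,'' and defers it. Worse, the local counting you sketch is false as stated. It is not true that horizontal merges eliminate a constant fraction of the children of each branching node: a branching node all of whose children are internal clusters (i.e., clusters having bottom boundary nodes) admits \emph{no} legal horizontal merge among its children, and such nodes can form a constant fraction of all clusters (every non-bottom level of a complete binary tree is in this situation). The claim can even fail globally: in a spider whose root has $k$ legs of length two, every child of the unique branching node is internal, so horizontal merges eliminate nothing at all, and the entire contraction is carried by vertical merges. The inequality $\#\mathrm{leaves} \geq \#\mathrm{branching\ nodes}+1$ does not repair this, and there is a further subtlety you gloss over: the greedy algorithm pairs children in their fixed tree order, $(v_0,v_1),(v_2,v_3),\dots$, so ``enough legal pairings exist'' must be established for that deterministic pairing, not for a pairing chosen to suit the analysis. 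What the actual proof in \cite{Bille} supplies, and what is missing here, is a global charging argument over the whole auxiliary tree: clusters that fail to merge in a round (branching nodes whose children are all internal, odd leftovers, chain endpoints) are charged to merges that do occur elsewhere --- among their descendants or on adjacent chains --- with $O(1)$ charge per merge, which is what yields a concrete shrinkage factor uniformly over all configurations. Without that argument or a substitute for it, the statement ``each round shrinks the number of surviving clusters by a constant factor'' is an assertion rather than a proof, so the proposal does not establish the theorem.
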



Consider to support operations on a tree $T$ which is represented
by top DAG $\mathcal{T}D$.
From now on, a node $x$ in $T$ stands for the node with preorder $x$ in $T$.
By storing additional information to each vertex of the top DAG,
many tree operations can be supported~\cite{Bille}.
For example, $Access(x)$ returns the label of $x$
and $Decompress(x)$ returns the subtree $T(x)$ of $T$ rooted at $x$.
For a tree with $n$ nodes, all operations except $Decompress(\cdot)$
are done in ${\rm O}(\log n)$ time, and $Decompress(\cdot)$ is done in
${\rm O}(\log n + |T(x)|)$ time.
Algorithm~\ref{access:topdag} shows a pseudo code.

\section{top ZDD}
We explain our top ZDD, which is a representation of ZDD by top DAG compression.
Though it is easy to apply our compression scheme for general rooted DAGs,
we consider only compression of ZDDs.

A ZDD $G=(V,E)$ is a directed acyclic graph in which nodes have labels $\ell(\cdot)$
(terminal nodes have $\bot$ and $\top$) and edges have labels $0$ or $1$.
We can regard it as a graph with only edges being labeled.
For each edge $(u,v)$ of ZDD $G$, we define its label
as a pair (edge label $0$/$1$, $\ell(u)-\ell(v)$) if $v$ is a branching node,
or a pair (edge label $0$/$1$, $\bot$/$\top$) if $v$ is a terminal node.
In practice, we can use $c+1$ instead of $\bot$ and $c+2$ instead of $\top$ for the second element,
where $c+1 = \ell(\bot) = \ell(\top)$.
Below we assume ZDDs have labels for only edges,
and $0$-edge comes before $1$-edge for each node.

Next we consider top trees for edge-labeled trees.
The difference from node-labeled trees is only how to store the information
for single edge clusters.
In top trees,
we stored labels for both endpoints of edges.
We change this for storing only edge labels.

The top ZDD is constructed from a ZDD $G=(V,E)$ as follows.
\begin{enumerate}
\item We perform a depth-first traversal from the root of $G$ and
obtain a spanning tree $T$ of all branching nodes.
During the process, we do not distinguish $0$-edges and $1$-edges, and terminal nodes
are not included in the tree.
Nodes of the tree are identified with their preorders in $T$.  If we say node $u$,
it means the node in $T$ with preorder $u$.
We call edges of $G$ not included in $T$ as \emph{complement edges}.
\item We convert the spanning tree $T$ to a top tree $\mathcal{T}$ by the greedy algorithm.
\item For each complement edge $(u,v)$, we store its information in a node of $\mathcal{T}$ as follows.
If $v$ is a terminal, let $a$ be the vertex of the top tree corresponding to the cluster
of single edge between $u$ and its parent in $T$.  Note that $a$ is uniquely determined.
Then we store a triple ($(u,v)$, edge label $0$/$1$, $\bot$/$\top$) in $a$.
If $v$ is a branching node, we store the information of the complement edge
to a vertex of $T$ corresponding to a cluster containing both $u$ and $v$.
The information to store is a triple ($(u,v)$, edge label $0$/$1$, $\ell(u)-\ell(v)$).
We decide a vertex to store it as follows.  Let $a, b$ be the vertices of the top tree
corresponding to the clusters of single edges towards $u, v$ in $T$, respectively.
Then we store the triple in the lowest common ancestor ${\mathrm{lca}}(a,b)$ in $T$.
Here the information $(u,v)$ represents local preorders inside the cluster
corresponding to ${\mathrm{lca}}(a,b)$.
Note that ${\mathrm{lca}}(a,b)$ may not be the minimal cluster including both $u$ and $v$.
\item We create a top DAG $\mathcal{T}D$ by DAG compression by sharing identical clusters.
To determine identicalness of two clusters, we compare them together with the information
of complement edges in the clusters stored in step 3.
Complement edges which do not appear in multiple clusters are moved to the root of $T$.
\end{enumerate}
Figure~\ref{example:topzdd} shows an example of a top ZDD.
The left is the original ZDD and the right is the corresponding top ZDD.
Red and green edges show edges in the spanning tree and complement edges, respectively.
In this figure we show for each vertex of the top DAG, the corresponding cluster,
but they are not stored explicitly.

\begin{figure}[hbtp]
\begin{center}
\includegraphics[clip,width=8.0cm]{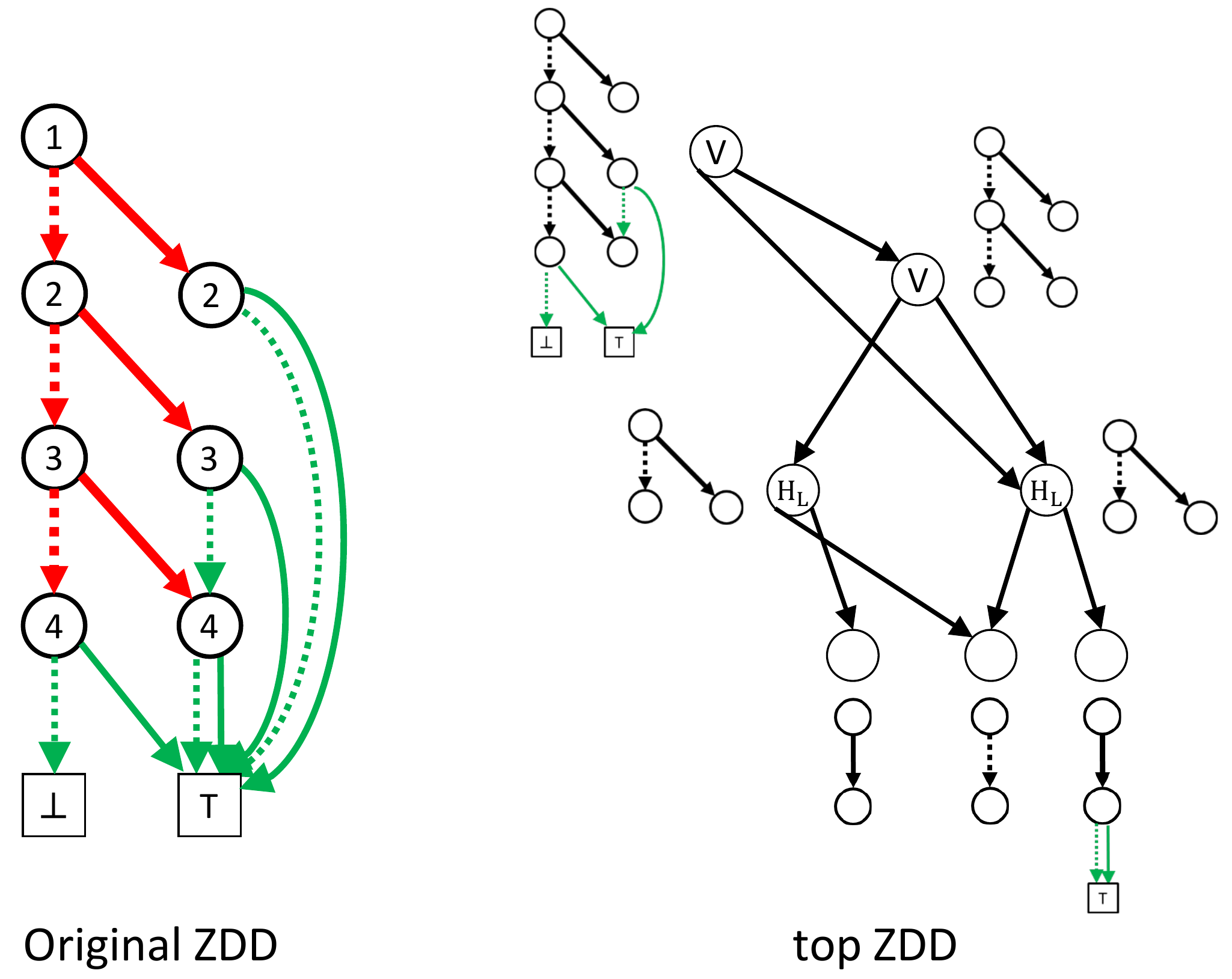}
\caption{An example of a top ZDD.  
Terminal nodes and branching nodes are depicted by squares and circles, respectively, and $0$-edges and $1$-edges are depicted by dotted and solid lines, respectivelty.
Red edges are spanning tree edges and green edges are complement edges.
For each vertex of the top DAG, the corresponding cluster and the information stored in the vertex
are shown.}
\label{example:topzdd}
\end{center}
\end{figure}

To achieve small space, it is important to use what data structure for representing each information.
For example, 
we explained that each vertex of the top DAG
stores the cluster size etc., this is redundant and the space can be reduced.
Next we explain our space-efficient data structure which is enough to support efficient queries in detail.

\subsection{Details of the data structure}
We need the following information to recover the original ZDD from a top ZDD.
\begin{itemize}
\item Information about the structure of top DAG $\mathcal{T}D$.
\item Information about each vertex of $\mathcal{T}D$.
There are three types of vertices: vertices corresponding to a leaf of the top tree,
vertices representing vertical merge, and vertices representing horizontal merge.
For each type we store different information.
\item Information about complement edges.  They are stored in the root or other vertices of $\mathcal{T}D$.
\end{itemize}
We show space-efficient data structures for storing these information.
In theory, we use the succinct bitvector with constant time rank/select support~\cite{Raman07}.
In practice, we use the SparseArray~\cite{OkanoharaS07} to compress a bitvector
if the ratio of ones in the bitvector is less than $\frac{1}{4}$,
and use the SparseArray for the bitvector whose 0/1 are flipped if
the ratio of zeros is less than $\frac{1}{4}$.
To store an array of non-negative integers, we use $\lfloor\log_2 m\rfloor$ bits for each entry
where $m$ is the maximum value in the array.
Let $n$ denote the number of internal nodes of a ZDD.
We use $n+1,n+2$ to represent terminals $\bot,\top$, respectively.

\subsubsection{The data structure for the structure of top DAG $\mathcal{T}D$}
We store top DAG $\mathcal{T}D$ after converting it to a tree.
We make tree $T'$ by adding dummy vertices to $\mathcal{T}D$.
For each vertex $x$ of $\mathcal{T}D$ whose in-degree is two or more, we do the following.
\begin{enumerate}
\item Let $a_1,\cdots,a_t$ be the vertices of $\mathcal{T}D$ from which
there are edges towards $x$.  Note that there may exist identical vertices among them
corresponding to different edges.
We create $t-1$ dummy vertices $d_1,\cdots,d_{t-1}$.
\item For each $1\leq i \leq t-1$, remove edge $(a_i,x_i)$ and add edge $(a_i,d_i)$.
\item For each dummy vertex $d_i$, we store information about a pointer to $x$.
In our implementation, we store the preorder of $x$ in $T'$ from which the dummy vertices are removed.
\end{enumerate}
Then we can represent the structure of the top DAG by the tree $T'$ and the pointers from the dummy vertices.

Next we explain how to store $T'$ and the information about the dummy vertices.
The structure of $T'$ is represented by the BP sequence~\cite{Navarro14}.
There are two types of leaves in $T'$: those which exist in the original top DAG,
and those for the dummy vertices.  To distinguish them, we use a bitvector.
Let $m$ be the number of leaves in $T'$.  We create a bitvector $\dummyBV$ of length $m$
whose $i$-th bit corresponds to the $i$-th leaf of $T'$ in preorder.
We set $\dummyBV[i] =1$ if the $i$-th leaf is a dummy vertex, 
and we set $\dummyBV[i] = 0$ otherwise.

We add additional information to dummy vertices to support efficient queries.
We define an array $\clustersize$ of length $D$ where $D$ is the number of dummy vertices.
For the $i$-th dummy vertex in preorder, let $s_i$ be the vertex pointed to by the dummy vertex.
We define $\clustersize[k] = \sum_{i=1}^{k}(\mbox{the number of vertices in the cluster represented by} s_i)$.
That is, $\clustersize[k]$ stores the cumulative sum of cluster sizes up to $k$.
This array is used to compute the cluster size for each vertex efficiently.

\subsubsection{Information on vertices}
We explain how to store information on vertices of $T'$ except for dummy vertices.

Each vertex corresponding to a leaf in the original top tree
is a cluster for a single edge in the spanning tree, and it is a non-dummy leaf in $T'$.
We sort these vertices in preorder in $T'$,
and store information on edges towards them in the following two arrays.
One is array $\lspan$ to store differences of levels between endpoints of edges.
Let $u$ and $v$ be the starting and the ending points of the edge corresponding to the $i$-th leaf,
respectively.  Then we set $\lspan[i] = \ell(v)-\ell(u)$.
The other is array $\edgetypespan$ to store if an edge is $0$-edge or $1$-edge.
We set $\edgetypespan[i] = 0$ if the edge corresponding to the $i$-th vertex is a $0$-edge,
and $\edgetypespan[i] = 1$ otherwise.

Each vertex of $T'$ corresponding to vertical merge or horizontal merge
is an internal vertex.
We sort internal vertices of $T'$ in their preorder.
Then we make a bitvector $\horizontalBV$ so that
$\horizontalBV[i] = 0$ if the $i$-th vertex stands for vertical merge,
and $\horizontalBV[i] = 1$ if it stands for horizontal merge.
For vertices corresponding to horizontal merge, we do not store additional information.
For vertices corresponding to vertical merge,
we use arrays $\pdiff$ and $\ldiff$ to store the differences of preorders and levels between
the top and the bottom boundary nodes of the merged cluster.
Let $x_i$ be the $i$-th vertex in preorder corresponding to vertical merge,
$cl_i$ be the cluster corresponding to $x_i$,
$t_i$ be the top boundary node of $cl_i$, and $b_i$ be the bottom boundary node of $cl_i$.
Note that $t_i$ and $b_i$ are nodes of the ZDD.
Then we set $\pdiff[i] = (\mbox{the local preorder of $b_i$ inside cluster $cl_i$})$
and $\ldiff[i] = \ell(b_i) -\ell(t_i)$.

\subsubsection{Information on complement edges}
Complement edges are divided into two: those stored in the root of the top DAG
and those stored in other vertices.  We represent them in a different way.

First we explain the data structure for storing complement edges in the root of the top DAG.
Let $E_{root}$ be the set of all complement edges stored in the root.
We sort edges of $E_{root}$ in the preorder of their starting point.
Orders between edges with the same starting point are arbitrary.

For complement edges stored in the root,
we store the preorders of their starting point using a bitvector $\srcroot$,
the preorders of their ending point using an array $\dstroot$,
and edge labels $0$/$1$ using an array $\edgetyperoot$.
The cluster corresponding to the root of top DAG is the spanning tree of the ZDD.
For each node $v$ of the spanning tree, we represent the number of complement edges in $E_{root}$
whose starting point is $v$, using a unary code.  We concatenate and store them
in preorder in the bitvector $\srcroot$.
For edges in $E_{root}$ sorted in preorder of the starting points,
we store the preorder of the ending point of the $i$-th edge in $\dstroot[i]$,
and set $\edgetyperoot[i] = 0$ if the $i$-th edge is a $0$-edge,
and set $\edgetyperoot[i] = 1$ otherwise.

Next we explain the data structure for storing complement edges in vertices other than the root.
Let $E_{in}$ be the set of those edges.
We sort the edges as follows.
\begin{enumerate}
\item We divide the edges of $E_{in}$ into groups based on the clusters having the edges.
These groups are sorted in preorder of vertices for the clusters.
\item Inside each cluster $cl(x)$, we sort the edges of $E_{in}$ in preorder
of starting points of the edges.
For edges with the same starting point, their order is arbitrary.
\end{enumerate}
We store the sorted edges of $E_{in}$ using a bitvector $\edgeBV$ and
three arrays $\srcin$, $\dstin$, and $\edgetypein$.
The bitvector $\edgeBV$ stores the numbers of complement edges in vertices of $T'$
by unary codes.
The arrays $\srcin$, $\dstin$, and $\edgetypein$ are defined as:
$\srcin[i] = (\mbox{the local preorder of the starting point of the $i$-th edge
inside the cluster})$,
$\dstin[i] = (${the local preorder of the ending point of the $i$-th edge
inside the cluster}$)$,
$\edgetypein[i] = 0$ if the $i$-th edge is a $0$-edge,
and
$\edgetypein[i] = 1$ otherwise.

Table~\ref{datastructure} summarizes the components of the top ZDD.

\subsection{The size of top ZDDs}
The size of top ZDDs heavily depends on not only the number of vertices in the spanning tree
after top DAG compression, but also the number of complement edges for which we store some information.
Therefore the size of top ZDDs becomes small if the number of nodes is reduced by
top DAG compression and many common complement edges are shared.

In the best case, top ZDDs are exponentially smaller than ZDDs.
\begin{theorem}
There exists a ZDD with $n$ nodes to which the corresponding top ZDD has
$O(\log n)$ vertices.
\end{theorem}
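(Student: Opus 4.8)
The plan is to exhibit an explicit family of ZDDs whose spanning tree is a path with identically labelled edges, and then to argue that the complement edges of these ZDDs are so regular that they are absorbed into the $O(\log n)$ shared clusters of the top DAG without creating any new vertices. Concretely, I would take $C=\{1,\dots,c\}$ and let $G$ be the ZDD for the full power set $2^{C}$: it has branching nodes $v_1,\dots,v_c$ with $\ell(v_i)=i$ and $\mathit{zero}(v_i)=\mathit{one}(v_i)=v_{i+1}$ for $i<c$, while $\mathit{zero}(v_c)=\mathit{one}(v_c)=\top$. Together with the two terminals this gives $n=\Theta(c)$ nodes. A depth-first traversal that ignores the $0/1$ distinction visits $v_1,v_2,\dots,v_c$ in order, so the spanning tree $T$ is the path $v_1-v_2-\cdots-v_c$, and every tree edge carries the same label (a single edge whose two endpoints differ in level by exactly $1$).

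Next I would invoke the fact recalled in the top DAG section that a path of length $n$ with identical edge labels can be represented by a top DAG with $O(\log n)$ vertices. Because $T$ is a pure path, the greedy construction performs only vertical merges, pairing consecutive edges into clusters of essentially doubling size across $O(\log n)$ levels, using the height bound of \cite{Bille}. Since all tree edges are identical, all clusters of the same shape coincide, so DAG compression collapses them and leaves only $O(\log n)$ distinct clusters.

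The main obstacle, and the step I would treat most carefully, is showing that storing the complement edges neither destroys this sharing nor introduces new vertices. Each branching node $v_i$ with $i<c$ contributes exactly one complement edge, namely the $1$-edge $v_i\to v_{i+1}$, which by the assignment rule is stored in $\mathrm{lca}(a,b)$, where $a,b$ are the single-edge clusters of the two consecutive path edges entering $v_i$ and $v_{i+1}$; the only other complement edges are the $O(1)$ edges from the root $v_1$ and from $v_c$ toward the terminals. I would verify that, expressed in local-preorder coordinates inside whichever cluster it is assigned to, such a complement edge always joins the same two positions (the seam between the cluster's two merged halves) and always carries the same label (a $1$-edge with level difference $1$). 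Consequently two clusters of equal shape still store identical complement information, so the sharing established in the previous paragraph is preserved and the complement edges occupy already-existing vertices rather than creating new ones; the $O(1)$ irregular edges to the terminals perturb only $O(1)$ clusters near the ends of the path. Putting these together, the top ZDD has $O(\log c)=O(\log n)$ vertices, which proves the statement.
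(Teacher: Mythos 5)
Your proposal is correct and follows essentially the same route as the paper's own proof: the power-set ZDD whose spanning tree is a path of identical $0$-edges, the greedy top tree collapsing to $O(\log n)$ vertical-merge clusters, and the key observation that each complement $1$-edge sits at the seam of its cluster with identical local preorders and label, so DAG compression sharing is preserved (the paper states this as the $k$-th vertical-merge vertex storing an edge between local preorders $2^{k-1}+1$ and $2^{k-1}+2$, with the terminal edges pushed to the root). No substantive difference to report.
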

\begin{proof}
A ZDD storing a power set with $n=2^m$ elements satisfies the claim.
Figure~\ref{besttopzdd} shows this ZDD and top ZDD.
A spanning tree of the ZDD is a path of $2^m$ many $0$-edges.
Its top tree has a leaf corresponding to a $0$-edge of length $1$, 
and internal vertices form a complete binary tree with height $m$.
If we apply DAG compression to this top tree, we obtain the DAG of length $m$ as
shown in Figure~\ref{besttopzdd}.
Sharing complement edges also works very well.
The $k$-th vertex from below representing a vertical merge stores
a $1$-edge connecting a node with local preorder $2^{k-1}+1$ inside a cluster
and a node with local preorder $2^{k-1}+2$.
The root of the top DAG stores $0$-edge and $1$-edge to the terminal $\top$.
Because the height of the top DAG is $O(\log n)$, the claim holds.
\end{proof}
\begin{figure*}[hbtp]
\begin{center}
\includegraphics[clip,width=4.0cm]{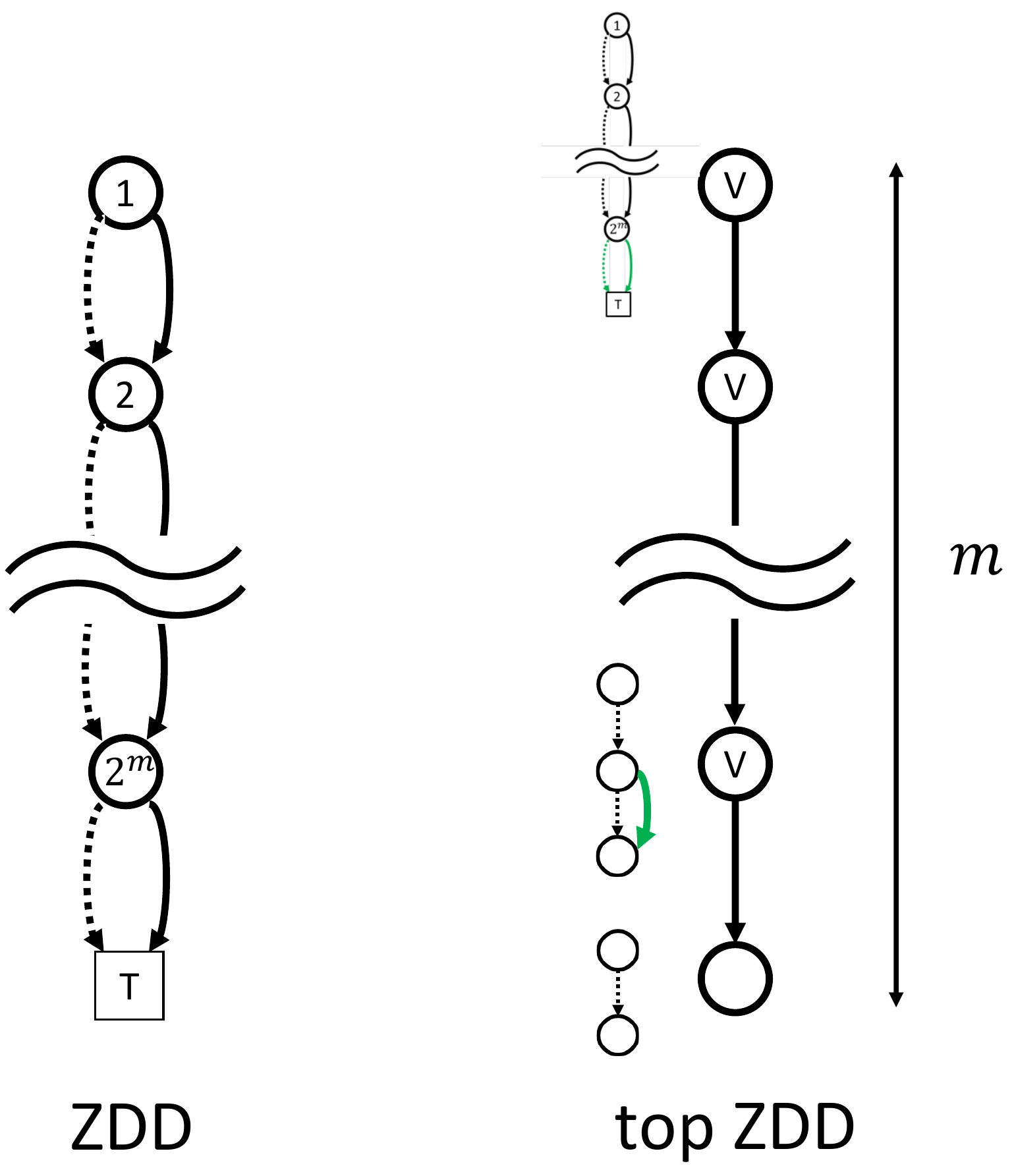}
\caption{A top ZDD with $O(\log n)$ vertices, where $n=2^m$.}
\label{besttopzdd}
\end{center}
\end{figure*}

\subsection{Operations on top ZDDs}
We give algorithms for supporting operations on the original ZDD using the top ZDD.
We consider the following three basic operations.
We identify a node $x$ of the ZDD with the vertex in the spanning tree $T$
used to create the top ZDD whose preorder is $x$.
\begin{itemize}
\item $\ell(x)$: returns the label of a branching node $x$.
\item $zero(x)$: returns the preorder of the node pointed to by the $0$-edge of $x$,
or returns $\bot$ or $\top$ if the node is a terminal.
\item $one(x)$:  returns the preorder of the node pointed to by the $1$-edge of $x$,
or returns $\bot$ or $\top$ if the node is a terminal.
\end{itemize}
We show $\ell(x)$ is done in $O(\log n)$ time and
other operations are done in $O(\log^2 n)$ time where $n$ is the
number of nodes of the ZDD.
Below we denote the vertex of $T'$ stored in the top ZDD with preorder $x$
by ``vertex $x$ of $T'$''.

First we explain how to compute $\ell(x)$ in $O(\log n)$ time.
We can compute $\ell(x)$ recursively using a similar algorithm to those
on the top DAG.  A difference is that
we assumed that each vertex of the top DAG stores the cluster size,
while in the top ZDD it is not stored to reduce the space requirement.
Therefore we have to compute it using the information in Table~\ref{datastructure}.

To work the recursive computation, we need to compute the cluster size $size(x')$
represented by vertex $x'$ of $T'$ efficiently.
We can compute $size(x')$ by the number of non-dummy leaves in the subtree of $T'$
rooted at $x'$, and the sizes of the clusters corresponding to dummy leaves
in the subtree rooted at $x'$.
If we merge two clusters of size $a$ and $b$, the resulting cluster has size $a+b-1$.
Therefore if we merge $k$ clusters whose total size is $S$, the resulting cluster
after $k-1$ merges has size $S-k+1$.
These values can be computed from the BP sequence $bp$ of $T'$,
the array $\clustersize$, and the bitvector $\dummyBV$.
By using $bp$, we can compute the interval $[l, r]$ of leaf ranks in the subtree
rooted at $x'$.
Then, using $\dummyBV$, we can find the number $c$ of non-dummy leaves
and the interval $[l',r']$ of non-dummy leaf ranks, in the subtree of $x'$.
Because $\clustersize$ is the array for storing cumulative sums of cluster sizes
for dummy leaves, the summation of sizes of clusters corresponding to
$l'$-th to $r'$-th dummy leaves is obtained from $\clustersize[r']-\clustersize[l'-1]$.
Because the size of a cluster for a non-dummy leaf is always $2$,
the summation of cluster sizes for non-dymmy leaves is also obtained.
Algorithm~\ref{size(x')} gives a pseudo code for computing $size(x')$.
This can be done in constant time.

Using the function $size(x')$, we can compute a recursive function similar to Algorithm~\ref{access:topdag}.
Instead of $D(\cdot)$ in Algorithm~\ref{access:topdag}, we use $\pdiff$.
When we arrive at a dummy leaf, we use a value in $\dstdummy$
to move to the corresponding internal vertex of $T'$ and restart the recursive computation.
Then for the vertex of the original ZDD whose preorder in $T$ is $x$,
we can obtain the leaf of $T'$ corresponding to the cluster of a single edge
containing $x$.

To compute $\ell(x)$, we traverse the path from the root of $T'$ to
the leaf corresponding to the cluster containing $x$.
First we set $s=1$.
During the traversal, if the current vertex is for vertical merge and
the next vertex is its right child, that is, the next cluster is in the bottom,
we add the $\ldiff$ value of the top cluster to $s$.
The index of $\ldiff$ is computed from $\horizontalBV$ and $bp$.
When we reach the leaf $p'$ of $T'$, if $x$ is its top boundary node,
it holds $\ell(x) = s$, otherwise, let $k = leaf\_rank(p')$, then
we obtain $\ell(x) = s + \lspan[k-rank_1(\dummyBV,k)]$.
Because each operation is done in constant time and the height of the top DAG is
$O(\log n)$, $\ell(x)$ is computed in $O(\log n)$ time.

Next we show how to compute $y = zero(x)$. We can compute $one(x)$ in a similar way.
We do a recursive computation as operations on top DAG,
A difference is how to process complement edges.
There are two cases: if the $0$-edge from $x$ is in the spanning tree or not.
If the $0$-edge from $x$ is in the spanning tree, the edge is stored
in a cluster with a single edge $(x, y)$.  The top boundary node of such a cluster is $x$.
Therefore we search clusters whose top boundary node is $x$.
If the $0$-edge from $x$ is not in the spanning tree, it is a complement edge and
it is stored in some vertex on the path from a cluster $C$ with a single edge
whose bottom boundary node is $x$ to the root.  Therefore we search for $C$.

%
First we recursively find a non-dummy leaf of $T'$ whose top boundary node is $x$.
During this process, if there is a vertex whose top boundary is $x$ and its cluster contains more than
one edge and corresponds to horizontal merge, we move to the left child, because
the $0$-edge from $x$ must exist in the left cluster.
If we find a non-dummy leaf of $T'$ which corresponds to a cluster with a single edge
and its top boundary node is $x$, its bottom boundary node is $y = zero(x)$.
We climb up the tree until the root to compute the global preorder of $y$.
If there does not exist such a leaf, the $0$-edge from $x$ is not in the spanning tree.
We find a cluster with a single edge whose bottom boundary node is $x$.
%
From the definition of the top ZDD, the $0$-edge from $x$ is stored in some vertices
visited during the traversal.
Because complement edges stored in a cluster are sorted in local preorders inside the cluster
of starting points, we can check if there exists a $0$-edge whose starting point is $x$
in $O(\log n)$ time.
If it exists, we obtain the local preorder of $y$ inside the cluster.
By going back to the root, we obtain the global preorder of $y$.
Note that complement edges for all clusters are stored in one array, and therefore we need
to obtain the interval of indices of the array corresponding to a cluster.
This can be done using $\edgeBV$.
In the worst case, we perform a binary search in each cluster on the search path.
Therefore the time complexity of $zero(x)$ is $O(\log^2 n)$.

\section{Experimental Comparison}
We compare our top ZDD with existing data structures.
We implemented top ZDD with C++ and measured the required space for storing the data structure.
For comparison, we used the following three data structures.
\begin{itemize}
\item top ZDD (proposed): we measured the space for storing the data structures in Table~\ref{datastructure}.
\item DenseZDD~\cite{denseZDD}: data structures for representing ZDDs using succinct data structures.
Two data structures are proposed; one support constant time queries and the other has
$O(\log n)$ time complexity.  We used the latter that uses less space.
\item a standard ZDD: a data structure which naively represents ZDDs.  We store for each node
its label and two pointers corresponding to a $0$-edge and a $1$-edge.
The space is $2n\lfloor\log n\rfloor + n\lfloor\log c\rfloor$ bits where $n$ is the number of nodes
of a ZDD and $c$ is the size of the universe of a set family.
\end{itemize}

We constructed ZDDs of the following set families.
\begin{itemize}
\item The power set of a set $\{1,..,A\}$ with $A$ elements.
\item For the set $\{1,..,A\}$ with $A$ elements,
the family of all the set $S$ satisfying \\
$(\mbox{The maximum value of $S$})-(\mbox{The minimum value of $S$})\leq B$.
\item For the set $\{1,..,A\}$ with $A$ elements,
the family of all the sets with cardinality at most $B$.
\item Knapsack set families with random weights.  That is,
for $i$-th element in a set ($1\leq i \leq A$), we define its weight $w_i$
as a uniformly random integer in $[1,W]$, then sort the elements in decreasing order of weights,
and construct a set family consisting of all sets with weight at most $C$.
\item The family of edge sets which are matching of a given graph.
As for graphs, we used the $8\times 8$ grid graph,
the complete graph with $12$ vertices $K_{12}$,
and a real communication network {\itshape ``Interoute''}.
\item Set families of frequent item sets.
\item Families of edge sets which are paths from the bottom left vertex to the top right vertex
in $n\times n$ grid graph, for $n=6,7,8,9$.
\item Families of solutions of the $n$-queen problem, for $n=11,12,13$.
\end{itemize}
We used several values for the parameters $A,B,C,W$.
The results are shown in Tables~\ref{powerset} to~\ref{top:nqueen}.
The unit of size is bytes.

\begin{table}[hbtp]
\small
\caption {The power set of $\{1,..,A\}$.}
\label{powerset}
\centering
\begin{tabular}{|c|c|c|c|}
\hline
        & top ZDD        & DenseZDD & $(2n\lfloor\log n\rfloor + n \lfloor \log c \rfloor)/8$ \\ \hline
$A=1000$ & \textbf{2,297} & 4,185    & 3,750                                                 \\ \hline
$A=50000$ & \textbf{2,507} & 178,764  & 300,000                                               \\ \hline
\end{tabular}
\end{table}

\begin{table}[hbtp]
\small
\caption {For the set $\{1,..,A\}$ with $A$ elements,
the family of all the set $S$ satisfying 
$(\mbox{The maximum value of $S$})-(\mbox{The minimum value of $S$})\leq B$.}
\label{limitedwidth}
\centering
\begin{tabular}{|c|c|c|c|}
\hline
             & top ZDD & DenseZDD & $(2n\lfloor\log n\rfloor + n \lfloor \log c \rfloor)/8$ \\ \hline
$A=500,B=250$  & \textbf{2,471}   & 227,798  & 321,594                                               \\ \hline
$A=1000,B=500$  & \textbf{2,551}   & 321,594  & 1,440,375                                             \\ \hline
\end{tabular}
\end{table}

\begin{table}[hbtp]
\small
\caption{For the set $\{1,..,A\}$ with $A$ elements,
the family of all the sets with cardinality at most $B$.}
\label{limitedsize}
\centering
\begin{tabular}{|c|c|c|c|}
\hline
               & top ZDD & DenseZDD & $(2n\lfloor\log n\rfloor + n \lfloor \log c \rfloor)/8$ \\ \hline
$A=100,B=50$   & \textbf{3,863}   & 9,544    & 9,882                                                 \\ \hline
$A=400,B=200$  & \textbf{13,654}  & 146,550  & 206,025                                               \\ \hline
$A=1000,B=500$ & \textbf{43,191}  & 966,519  & 1,440,375                                             \\ \hline
\end{tabular}
\end{table}

\begin{table}[hbtp]
\small
\caption{Knapsack set families with random weights.
$A$ is the number of elements, $W$ is the maximum weight of an element, $C$ is the capacity of the knapsack.}
\label{knapsack}
\centering
\begin{tabular}{|c|c|c|c|}
\hline
                     & top ZDD            & DenseZDD  & $(2n\lfloor\log n\rfloor + n \lfloor \log c \rfloor)/8$ \\ \hline
$A=100,W=1000,C=10000$ & \textbf{1,659,722} & 1,730,401 & 2,444,405                                               \\ \hline
$A=200,W=100,C=5000  $ & \textbf{1,032,636} & 1,516,840 & 2,181,688                                               \\ \hline
$A=1000,W=100,C=1000 $ & \textbf{2,080,965} & 2,929,191 & 4,491,025                                               \\ \hline
$A=5000,W=100,C=200 $  & \textbf{1,135,653} & 1,740,841 & 2,884,279                                               \\ \hline
$A=1000,W=10,C=1000$   & \textbf{1,383,119} & 2,618,970 & 3,990,350                                               \\ \hline
$A=1000,W=100,C=1000$  & \textbf{565,740}   & 656,728   & 1,056,907                                               \\ \hline
\end{tabular}
\end{table}

\begin{table}[hbtp]
\small
\caption{The family of edge sets which are matching of a given graph.}
\label{top:matching}
\centering
\begin{tabular}{|c|c|c|c|}
\hline
 & top ZDD & DenseZDD & $(2n\lfloor\log n\rfloor + n \lfloor \log c \rfloor)/8$ \\ \hline
$8\times 8$ grid & \textbf{12,246} & 16,150 & 18,014 \\ \hline
complete graph $K_{12}$ & 23,078 & \textbf{16,304} & 25,340 \\ \hline
“{\itshape{Interoute}}” & \textbf{30,844} & 39,831 & 50,144 \\ \hline
\end{tabular}
\end{table}

\begin{table}[hbtp]
\small
\caption{Set families of frequent item sets.}
\label{top:frequent}
\centering
\begin{tabular}{|c|c|c|c|}
\hline
 & top ZDD & DenseZDD & $(2n\lfloor\log n\rfloor + n \lfloor \log c \rfloor)/8$ \\ \hline
“{\itshape{mushroom}}” $(p=0.001)$ & 104,774 & \textbf{91,757} & 123,576 \\ \hline
“{\itshape{retail}}” $(p=0.00025)$ & \textbf{59,894} &{65,219} & 62,766 \\ \hline
“{\itshape{T40I10D100K}}” $(p=0.005)$ & \textbf{177,517} & 188,400 & 248,656 \\ \hline
\end{tabular}
\end{table}

\begin{table}[hbtp]
\small
\caption{Families of paths in $n\times n$ grid graph.}
\label{top:gridpath}
\centering
\begin{tabular}{|c|c|c|c|}
\hline
 & top ZDD & DenseZDD & $(2n\lfloor\log n\rfloor + n \lfloor \log c \rfloor)/8$ \\ \hline
$n=6$ & \textbf{17,194} & 28,593 & 37,441 \\ \hline
$n=7$ & \textbf{49,770} & 107,529 & 143,037 \\ \hline
$n=8$ & \textbf{157,103} & 401,251 & 569,908 \\ \hline
$n=9$ & \textbf{503,265} & 1,465,984 & 2,141,955 \\ \hline
\end{tabular}
\end{table}

\begin{table}[hbtp]
\small
\caption{Families of solutions of the $n$-queen problem.}
\label{top:nqueen}
\centering
\begin{tabular}{|c|c|c|c|}
\hline
 & top ZDD & DenseZDD & $(2n\lfloor\log n\rfloor + n \lfloor \log c \rfloor)/8$ \\ \hline
$n=11$  & 40,792 & \textbf{35,101} & 45,950 \\ \hline
$n=12$ & 183,443 &\textbf{167,259} & 229,165 \\ \hline
$n=13$ & 866,749 & \textbf{799,524} & 1,126,295 \\ \hline
\end{tabular}
\end{table}

We found that for all data sets, the top ZDD uses less space
than the naive representation of the standard ZDD.
We also confirmed that
the data sets in Tables~\ref{powerset}, \ref{limitedwidth}, and~\ref{limitedsize}
can be compressed very well by top ZDDs.
Table~\ref{knapsack} shows the results on the sets of solutions of knapsack problems.
For any case, the top ZDD uses less space than the DenseZDD,
and for some cases the memory usage of the top ZDD is almost the half of that of the DenseZDD.
Tables~\ref{top:matching} and~\ref{top:frequent} show the results for families of matching in a graph
and frequent item sets, respectively.  There are a few case that the DenseZDD uses less space
than the top ZDD.

The results above are for monotone set families, that is, any subset of a set a the family
also exists in the family.  Tables~\ref{top:gridpath} and~\ref{top:nqueen} show results
on non-monotone set families.
For the set of edges on the path from the bottom left corner to the top right corner
of an $n\times n$ grid graph, the top ZDD uses less space than the DenseZDD,
and for $n=9$, the top ZDD uses about $1/3$ the memory of DenseZDD.
On the other hand, for the sets of all the solutions of the $n$-queen problem,
the top ZDD uses about 10 \% more space than the DenseZDD.
From these experiments we confirmed that the top ZDD uses less space than the DenseZDD
for many set families.

Next we show construction time and edge traverse time of the top ZDD and the DenseZDD
in Tables~\ref{time:powerset} to~\ref{time:nqueen}.
For edge traverse time, we traversed from the root of a ZDD towards terminals
by randomly choosing $0$- or $1$-edge 65,536 times, and took the average.
When we arrived at a terminal, we restarted from the root.

The results show the DenseZDD is faster than the top ZDD for construction and traverse,
except for the construction time for the data set {\itshape{retail}}.
The traverse algorithm of the top ZDD is recursive and in the worst case
it recurses $\Theta(\log n)$ times, whereas that for the DenseZDD is not recursive.

\section{Concluding Remarks}
We have proposed top ZDD to compress a ZDD by regarding it as a DAG.
We compress a spanning tree of a ZDD by the top DAG compression,
and compress other edges by sharing them as much as possible.
We showed that the size of a top ZDD can be logarithmic of that of the standard ZDD.
We also showed that navigational operations on a top ZDD are done in time polylogarithmic
to the size of the original ZDD.
Experimental results show that the top ZDD always uses less space than the standard ZDD,
and uses less space than the DenseZDD for most of the data.

Future work will be as follows.
First, in the current construction algorithm, we create a spanning tree of ZDD by a depth-first search,
but this may not produce the smallest top ZDD.  For example, if we choose all $0$-edges,
we obtain a spanning tree whose root is the terminal $\top$, and this might be better.
Next, in this paper we considered only traversal operations and did not give advanced operations
such as choosing the best solution among all feasible solutions based on an objective function.
Lastly, we considered only compressing ZDDs, but our compression algorithm can be
used for compressing any DAG.  We will find applications of our compression scheme.

\bibliographystyle{abbrv}
\bibliography{hoge.bib}

\appendix

\newpage
\section{Pseudo Codes}
\begin{algorithm}[h]
\caption{$Access(x)$: computes the label of a node whose preorder in the tree representing the top DAG is $x$.}
\label{access:topdag}
\begin{algorithmic}[1]
\Require{Preorder $x$}
\Ensure{The label of node $x$}
\State{$r \gets $ the root of top DAG}
\State{\Return{\textsc{sub}$(r,x)$}}
\Procedure{sub}{$u,k$}
  \If{vertex $u$ corresponds to a cluster with a single edge $e$}
	\If{$k=1$}
		\State{\Return{(the label $s$ of the starting point of $e$)}}
	\Else{}
		\State{\Return{(the label $s$ of the ending point of $e$)}}
	\EndIf	
  \Else
  	\State{$v \gets (\mbox{the left child of $u$}$}
  	\State{$w \gets (\mbox{the right child of $u$})$} 
  	\State{$C(v) \gets (\mbox{the size of the cluster of $v$})$}
  	\State{$C(w) \gets (\mbox{the size of the cluster of $w$})$}
  	\If{vertex $u$ is horizontal merge}
  		\If{$1\leq k \leq C(v)$}
  			\State{\Return{\textsc{sub}$(v,k)$}}
  		\Else{}
  			\State{\Return{\textsc{sub}$(w,k-C(v)-1)$}}
		\EndIf
  	\Else{}
  		\State{$D(v)\gets (\mbox{the preorder of the bottom boundary node of the cluster of $v$})$}
  		\If{$1\leq k \leq D(v)$}
  			\State{\Return{\textsc{sub}$(v,k)$}}
  		\ElsIf{$D(v)+1\leq k \leq D(v)+C(w)$}
  			\State{\Return{\textsc{sub}$(w,k-D(v)-1)$}}
  		\Else{}
  			\State{\Return{\textsc{sub}$(v,k-C(w)-1)$}}
  		\EndIf
	\EndIf
  \EndIf
\EndProcedure
\end{algorithmic}
\end{algorithm}

\begin{algorithm}[h]
\caption{$size(x')$: the size of the cluster corresponding to vertex $x'$ of $T'$}
\label{size(x')}
\begin{algorithmic}[1]
\Require{Preorder $x'$}
\Ensure{The size of the cluster for $x'$}
\State{$l  \gets leaf\_rank(leftmost\_leaf(x))$}
\State{$r  \gets leaf\_rank(rightmost\_leaf(x))$}
\State{$l' \gets rank_1(\dummyBV,l-1)+1 $}
\State{$r' \gets rank_1(\dummyBV,r)$}
\State{$k \gets r-l+1$}
\State{$c \gets (r - l+1) - (r' - l') $}
\If{$l' = 0$}
	\State{\Return{$\clustersize[r'] + 2c - k $} }
\Else
	\State{\Return{$\clustersize[r'] - \clustersize[l'-1] + 2c - k $ } }
\EndIf
\end{algorithmic}
\end{algorithm}

\newpage
\section{Components of Top ZDD}

\begin{table}[h]
\caption{Components of the top ZDD}
\label{datastructure}
\begin{tabular}{|c|p{33em}|}
\hline
$bp$ & BP sequence representing the structure of $T'$ \\ \hline
$\dummyBV$ & bitvector showing $i$-th leaf is a dummy vertex or not \\ \hline
$\clustersize$  & array storing cumulative sum of cluster sizes of the first to the $i$-th dummy leaves \\ \hline
$\lspan$ & array storing differences of labels of ending points of $i$-th non-dummy leaf \\ \hline
$\edgetypespan$ & array showing the edge corresponding to the $i$-th non-dummy leaf is $0$-edge or not \\ \hline
$\horizontalBV$ & bitvector showing $i$-th internal vertex is a vertical merge or not \\ \hline
$\pdiff$        & array storing differences of preorders between the top and the bottom boundary nodes
of the vertex corresponding to $i$-th vertical merge \\ \hline
$\ldiff$        & array storing differences of labels between the top and the bottom boundary nodes
of the vertex corresponding to $i$-th vertical merge \\ \hline
$\srcroot$      & bitvector storing in unary codes the number of complement edges from each vertex  \\ \hline
$\dstroot$      & array storing preorders of ending points of the $i$-th complement edge stored in root \\ \hline
$\edgetyperoot$ & array showing the $i$-th complement edge stored in the root is a $0$-edge or not \\ \hline
$\edgeBV$       & bitvector storing in unary codes the number of complement edges from each vertex
stored in the root \\ \hline
$\srcin$        &  array storing local preorders of starting points of $i$-th complement edge stored in non-root \\ \hline
$\dstin$        &  array storing local preorders of ending points of $i$-th complement edge stored in non-root \\ \hline
$\edgetypein$   &  array showing the $i$-th complement edge stored in non-root is $0$-edge or not \\ \hline
\end{tabular}
\end{table}

\newpage
\section{Construction and Traverse Time}

\begin{table}[h]
\centering
\caption {The power set of $\{1,..,A\}$.}
\label{time:powerset}
\begin{tabular}{|c|c|c|c|c|}
\hline
  \multirow{2}{*}{} & \multicolumn{2}{c|}{Construction Time{(}s{)}} & \multicolumn{2}{c|}{Traverse Time {(}$\mu$ s{)}} \\ \cline{2-5} 
 &\ \  top ZDD \ \  & DenseZDD &\ \ top ZDD\ \  & DenseZDD \\ \hline
$A=1000$ & 0.006 & 0.004 & 13.546 & 0.458 \\ \hline
$A=50000$ & 0.217 & 0.116 & 11.768 & 0.198 \\ \hline
\end{tabular}
\end{table}

\begin{table}[h]
\caption {For the set $\{1,..,A\}$ with $A$ elements,
the family of all the set $S$ satisfying 
$(\mbox{The maximum value of $S$})-(\mbox{The minimum value of $S$})\leq B$.}
\label{time:limitedwidth}
\centering
\begin{tabular}{|c|c|c|c|c|}
\hline
  \multirow{2}{*}{} & \multicolumn{2}{c|}{Construction Time{(}s{)}} & \multicolumn{2}{c|}{Traverse Time {(}$\mu$ s{)}} \\ \cline{2-5} 
 &\ \  top ZDD \ \  & DenseZDD &\ \ top ZDD\ \  & DenseZDD \\ \hline
$A=500,B=250$ & 0.264 & 0.078 & 9.082 & 0.244 \\ \hline
$A=1000,B=500$ & 0.776 & 0.412 & 10.419 & 0.229 \\ \hline
\end{tabular}
\end{table}

\begin{table}[h]
\caption{For the set $\{1,..,A\}$ with $A$ elements,
the family of all the sets with cardinality at most $B$.}
\label{time:limitedsize}
\centering
\begin{tabular}{|c|c|c|c|c|}
\hline
  \multirow{2}{*}{} & \multicolumn{2}{c|}{Construction Time{(}s{)}} & \multicolumn{2}{c|}{Traverse Time {(}$\mu$ s{)}} \\ \cline{2-5} 
 &\ \  top ZDD \ \  & DenseZDD &\ \ top ZDD\ \  & DenseZDD \\ \hline
$A=100,B=50$ & 0.011 & 0.006 & 10.892 & 0.320 \\ \hline
$A=400,B=200$ & 0.269 & 0.123 & 16.019 & 0.534 \\ \hline
$A=1000,B=500$ & 2.013 & 0.878 & 20.101 & 0.412 \\ \hline
\end{tabular}
\end{table}

\begin{table}[h]
\caption{Knapsack set families with random weights.
$A$ is the number of elements, $W$ is the maximum weight of an element, $C$ is the capacity of the knapsack.}
\label{time:knapsack}
\centering
\scalebox{0.9}{
\begin{tabular}{|c|c|c|c|c|}
\hline
  \multirow{2}{*}{} & \multicolumn{2}{c|}{Construction Time{(}s{)}} & \multicolumn{2}{c|}{Traverse Time {(}$\mu$ s{)}} \\ \cline{2-5} 
 &\ \  top ZDD \ \  & DenseZDD &\ \ top ZDD\ \  & DenseZDD \\ \hline
{$A=100,W=1000,C=10000$} & 2.974 & 1.210 & 16.716 & 0.259 \\ \hline
{$A=200,W=100,C=5000$} & 2.033 & 1.019 & 23.215 & 0.290 \\ \hline
{$A=1000,W=100,C=1000 $} & 7.010 & 1.481 & 21.698 & 0.534 \\ \hline
{$A=5000,W=100,C=200 $} & 2.084 & 0.954 & 7.365 & 0.519 \\ \hline
{$A=1000,W=10,C=1000$} & 2.597 & 1.712 & 14.127 & 0.244 \\ \hline
{$A=1000,W=100,C=1000$} & 7.010 & 1.481 & 21.698 & 0.534 \\ \hline
\end{tabular}
}
\end{table}

\begin{table}[h]
\caption{The family of edge sets which are matching of a given graph.}
\label{time:matching}
\centering
\begin{tabular}{|c|c|c|c|c|}
\hline
  \multirow{2}{*}{} & \multicolumn{2}{c|}{Construction Time{(}s{)}} & \multicolumn{2}{c|}{Traverse Time {(}$\mu$ s{)}} \\ \cline{2-5} 
 &\ \  top ZDD \ \  & DenseZDD &\ \ top ZDD\ \  & DenseZDD \\ \hline
 $8\times 8$ grid & 0.030 & 0.020 & 11.678 & 1.053 \\ \hline
complete graph $K_{12}$ & 0.019 & 0.009 & 14.864 & 0.290 \\ \hline
“{ \itshape{Interoute}}” & 0.028 & 0.016 & 15.588 & 0.397 \\ \hline
\end{tabular}
\end{table}

\begin{table}[h]
\caption{Set families of frequent item sets.}
\label{time:frequent}
\centering
\begin{tabular}{|c|c|c|c|c|}
\hline
  \multirow{2}{*}{} & \multicolumn{2}{c|}{Construction Time{(}s{)}} & \multicolumn{2}{c|}{Traverse Time {(}$\mu$ s{)}} \\ \cline{2-5} 
 &\ \  top ZDD \ \  & DenseZDD &\ \ top ZDD\ \  & DenseZDD \\ \hline
“{\small\itshape{mushroom}}” $(p=0.001)$ & 0.093 & 0.037 & 14.100 & 0.198 \\ \hline
“{\small\itshape{retail}}” $(p=0.00025)$ & 0.099 & 0.134 & 12.857 & 0.702 \\ \hline
“{\small\itshape{T40I10D100K}}” $(p=0.005)$ & 0.198 & 0.117 & 13.788 & 0.183 \\ \hline
\end{tabular}
\end{table}

\begin{table}[h]
\caption{Families of paths in $n\times n$ grid graph.}
\label{time:gridpath}
\centering
\begin{tabular}{|c|c|c|c|c|}
\hline
  \multirow{2}{*}{} & \multicolumn{2}{c|}{Construction Time{(}s{)}} & \multicolumn{2}{c|}{Traverse Time {(}$\mu$ s{)}} \\ \cline{2-5} 
 &\ \  top ZDD \ \  & DenseZDD &\ \ top ZDD\ \  & DenseZDD \\ \hline
$n=6$ & 0.022 & 0.011 & 15.491 & 0.793 \\ \hline
$n=7$ & 0.082 & 0.036 & 12.039 & 1.022 \\ \hline
$n=8$ & 0.536 & 0.153 & 12.229 & 1.144 \\ \hline
$n=9$ & 1.821 & 0.944 & 14.233 & 1.404 \\ \hline
\end{tabular}
\end{table}

\begin{table}[h]
\caption{Families of solutions of the $n$-queen problem.}
\label{time:nqueen}
\centering
\begin{tabular}{|c|c|c|c|c|}
\hline
  \multirow{2}{*}{} & \multicolumn{2}{c|}{Construction Time{(}s{)}} & \multicolumn{2}{c|}{Traverse Time {(}$\mu$ s{)}} \\ \cline{2-5} 
 &\ \  top ZDD \ \  & DenseZDD &\ \ top ZDD\ \  & DenseZDD \\ \hline
$n=11$ & 0.038 & 0.015 & 17.184 & 0.778 \\ \hline
$n=12$ & 0.335 & 0.065 & 21.581 & 0.900 \\ \hline
$n=13$ & 1.722 & 0.419 & 20.173 & 1.099 \\ \hline
\end{tabular}
\end{table}

\end{document}